\DeclareMathOperator*{\argmin}{arg\,min}
\newtheorem{theorem}{Theorem}
\newtheorem{definition}{Definition}
\newtheorem{example}{Example}
\newtheorem{corollary}{Corollary}
\newtheorem{lemma}{Lemma}
\newtheorem{remark}{Remark}
\newtheorem{protocol}{Protocol}
\newcommand{\cV}{\mathcal{V}}
\newcommand{\cB}{\mathcal{B}}
\begin{document}
%
\title{On the Communication Cost of Determining an Approximate Nearest Lattice Point}




%
\author{\IEEEauthorblockN{Maiara F. Bollauf \IEEEauthorrefmark{1},
Vinay A. Vaishampayan \IEEEauthorrefmark{2} and
Sueli I. R. Costa \IEEEauthorrefmark{3}}
\vspace{0.2cm}
\IEEEauthorblockA{\IEEEauthorrefmark{1} \IEEEauthorrefmark{3} Institute of Mathematics, Statistic and Computer Science\\
University of Campinas, Sao Paulo, Brazil\\ Email: maiarabollauf@ime.unicamp.br, sueli@ime.unicamp.br}
\vspace{0.2cm}
\IEEEauthorblockA{\IEEEauthorrefmark{2}Department of Engineering Science and Physics \\
College of Staten Island, City University of New York, Staten Island - NY, United States \\ Email: Vinay.Vaishampayan@csi.cuny.edu}}



\maketitle

\begin{abstract}
We consider the closest lattice point problem in a distributed network setting and study  the communication cost and the error probability for computing an approximate nearest lattice point, using the nearest-plane algorithm, due to Babai. Two distinct communication models, centralized and interactive, are considered. The importance of proper basis selection is addressed. Assuming a reduced basis for a  two-dimensional lattice, we determine the approximation error of the nearest plane algorithm.  The communication cost  for determining the Babai point, or equivalently, for constructing  the rectangular nearest-plane partition, is calculated in the interactive setting. For the centralized model, an algorithm is presented for reducing the communication cost  of the nearest plane algorithm in an arbitrary number of dimensions.    

\end{abstract}

{\small \textbf{\textit{Index terms}---Lattices, lattice quantization, distributed function computation, communication complexity.}}


%
\IEEEpeerreviewmaketitle

\section{Introduction}


	
	
	\begin{figure}[h]
\centering
\begin{minipage}{.45\linewidth}
  \includegraphics[width=\linewidth]{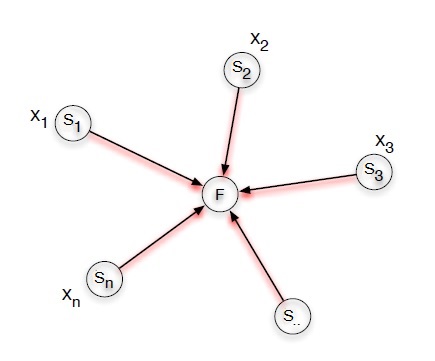}
  \caption{Centralized model}
  \label{fig-p1}
\end{minipage}
\hspace{.05\linewidth}
\begin{minipage}{.45\linewidth}
  \includegraphics[width=\linewidth]{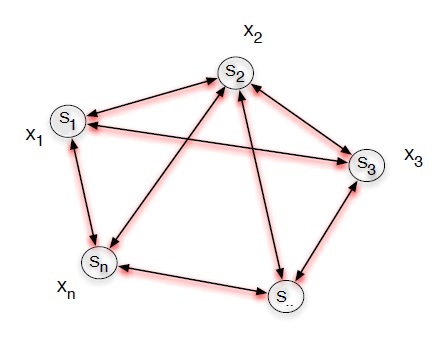}
  \caption{Interactive model}
  \label{fig-p2}
\end{minipage}
\end{figure}

A network consists of $N$ sensor-processor nodes (hereafter referred to as nodes) and possibly a central computing node (fusion center) $F$ interconnected by links  with limited bandwidth. 
Node $i$ observes real-valued random variable $X_i$. In the  \textit{centralized model} (Fig. \ref{fig-p1}), the objective is to compute a given function $f(X_1,X_2,\ldots,X_n)$ at the fusion center based on information communicated from each of the $N$ sensor nodes. In the \textit{interactive model} (Fig. \ref{fig-p2}), the objective is to compute the function $f(X_1,X_2,\ldots,X_n)$ at each sensor node (the fusion center is absent). In general, since the random variables are real valued, these calculations would require that the system communicate an infinite number of bits in order to compute $f$ exactly. Since the network has finite bandwidth links, the information must be quantized in a suitable manner, but quantization affects the accuracy of the function that we are trying to compute. Thus, the main goal is to manage the tradeoff between communication cost and function computation accuracy.
	
In this work,  $f$ computes the closest lattice point  to a real vector $x=(x_1,x_2,\ldots,x_n)$ in a given lattice $\Lambda$. The process of finding the closest lattice point is widely used for decoding lattice codes, and for quantization.  Lattice coding  offers significant coding gains for noisy channel communication~\cite{conwaysloane}  and for quantization~\cite{berger}. In a network, it may be necessary for a vector of measurements to be available at locations other than and possibly including nodes where the measurements are made. In order to reduce network bandwidth usage, it is logical to consider a vector quantized (VQ) representation of these measurements, subject to a fidelity criterion, for once a VQ representation is obtained, it can be forwarded in a bandwidth efficient manner to other parts of the network. However, there is a communication cost to obtaining the vector quantized representation. This paper is our attempt to understand the costs and tradeoffs involved. Example application settings include MIMO systems~\cite{RCP:2009}, and network management in wide area networks~\cite{KCR:2006}, to name a few. For prior work in the computer science community, see~\cite{Yao:1979},~\cite{KN:1997}. Information theory~[\ref{Shannon:1948}] has resulted in tight bounds,~\cite{Orlitsky:2001},~\cite{MaIshwar}.  The communication cost/error tradeoff  of refining the nearest-plane estimate obtained here is addressed in a companion paper~\cite{VB:2017}.

We observe here that algorithms for the closest lattice point problem have been studied in great detail, see~\cite{agrelletal} and the references therein, for a comprehensive survey and novel algorithms. However, in all these algorithms it is assumed that the vector components are available at the same location. In our work, the vector components are available at physically separated nodes and we are interested in the communication cost of exchanging this information in order to determine the closest lattice point. None of the previously proposed fast algorithms consider this communication cost.

The remainder of our paper is organized into three sections:  Sec.~\ref{sec1} presents some basic definitions,  and establishes a framework for measuring the cost and error rate. Sec.~\ref{sec3} presents an expression for the probability of error of the distributed closest lattice point problem in an arbitrary two-dimensional case, Sec.~\ref{sec:babairate} presents rate estimates for both models for arbitrary $n>1$ and Sec.~\ref{secCC} presents conclusions and directions for future work.

\section{Lattice Basics, Voronoi and Babai Partitions} 
\label{sec1} 
Notation and essential aspects of lattice coding are described in this section.

\begin{definition}(\textit{Lattice}) A lattice $\Lambda \subset \mathbb{R}^M$ is the set of integer linear combinations of independent vectors $v_1,v_2,\ldots,v_n \in \mathbb{R}^M$ , with $n \leq M$,
\end{definition}  

\begin{definition}(\textit{Generator matrix}) The generator matrix of the lattice is represented by matrix $V$ with $i$th column  $v_i$, $i=1,2,\ldots,n$. Thus $\Lambda=\{Vu,~u\in \mathbb{Z}^n\},$ where $u$ is considered here as a column vector.
\end{definition}

We will assume in the sequence of our work that $\Lambda$ has full rank ($n=M$).

\begin{definition}(\textit{Voronoi cell}) The Voronoi cell $\cV(\lambda)$ of a lattice $\Lambda \subset \mathbb{R}^{n}$ is the subset of $\mathbb{R}^{n}$ containing all points nearer to lattice point $\lambda$ than to any other lattice point:
\begin{equation}
\cV(\lambda)=\{x \in \mathbb{R}^{n}: ||x-\lambda|| \leq ||x-\tilde{\lambda}||, \ \text{for all} \ \tilde{\lambda} \in \Lambda\},
\end{equation}
where $||.||$ denotes the Euclidean norm.
\end{definition}

\begin{definition}(\textit{Relevant vector}) A vector $v$ is said to be a relevant vector of a lattice $\Lambda$ if the intersection of the hyperplane $\{x \in \mathbb{R}^{n}: \langle x,v \rangle=\frac{1}{2}\langle v,v \rangle\}$ with $\cV(0)$ is an $(n-1)-$dimensional face of $\cV(0).$
\end{definition}

The closest vector problem (CVP) in a lattice can be described as an integer least squares problem with the objective of determining $u^*,$ such that
\begin{equation} \label{ilsp}
u^\ast =  \argmin_{u \in \mathbb{Z}^{n}} \mid\mid x-Vu \mid\mid^{2},
\end{equation}
where the norm considered is the standard Euclidean norm. The closest lattice point to $x$ is then given by $x_{nl}=Vu^\ast$. The mapping $g_{nl}~:~\mathbb{R}^n \rightarrow \Lambda,$ $x \mapsto x_{nl}$ partitions $\mathbb{R}^n$ into  Voronoi cells, each of volume $|\det V|$.

The nearest plane (np) algorithm computes $x_{np}$, an approximation to $x_{nl}$, given by $x_{np}=b_1 v_1 + b_2 v_2+\ldots+b_n v_n$, where $b_i \in \mathbb{Z}$ is obtained as follows, derived from \cite{babai}. 

Let ${\mathcal S}_i$ denote the subspace spanned by the vectors $\{v_1,v_2,\ldots,v_i\}$, $i=1,2,\ldots,n$. Let ${\mathcal P}_i(z)$ be the orthogonal projection of $z$ onto ${\mathcal S}_i$ and let $v_{i,i-1}={\mathcal P}_{i-1}(v_i)$ be the nearest vector to $v_i$ in ${\mathcal S}_{i-1}$.
We have the following unique decomposition: $v_i=v_{i,i-1}+v_{i,i-1}^{\perp}$. Also,  let $z_i^\perp=z_{i}-{\mathcal P}_i(z_{i})$. Start with $z_n=x$ and $i=n$ and compute $b_i=\left[ \langle z_{i},v_{i,i-1}^\perp \rangle/\|v_{i,i-1}^\perp\|^2 \right]$, $z_{i-1}={\mathcal P}_{i-1}(z_i)-b_i v_{i,i-1}$, for $i=n,n-1,\ldots,1$. Here $[x]$ denotes the nearest integer to $x$

The mapping $g_{np}~:~\mathbb{R}^n \rightarrow \Lambda,$ $x \mapsto x_{np},$ partitions $\mathbb{R}^n$ into  hyper-rectangular cells with volume $|\det V|$, as illustrated in Fig.~\ref{fig-np} for the hexagonal lattice $A_2$. We refer to this partition as a \emph{Babai} partition. Note that this partition is basis dependent. In case $V$ is upper triangular with $(i,j)$ entry $v_{ij}$, each rectangular cell is axis-aligned and has sides of length  $|v_{11}|,|v_{22}|,\ldots,|v_{nn}|$. 

\begin{figure}[h!]
\begin{center}
		\includegraphics[height=3cm]{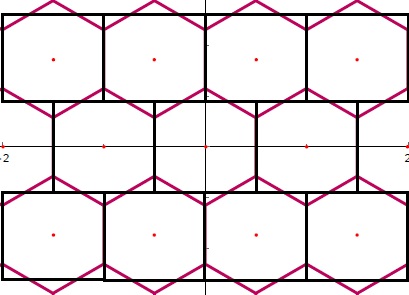}  
\caption{{Cells of the  \emph{np} or \textit{Babai} partition (black boundaries) and the Voronoi partition (pink solid lines) of $\mathbb{R}^{2}$ for hexagonal lattice $A_2$ with basis $\{(1,0),(1/2,\sqrt{3}/2)\}$}}
 \label{fig-np}
\end{center}
\end{figure}
\vspace{-0.15in}

\begin{definition} (Minkowski-reduced basis~\cite{mink}) A basis $\{v_{1},v_{2},...,v_{n}\}$ of a lattice $\Lambda$
in $\mathbb{R}^{n}$ is said to be Minkowski-reduced
if $v_{j},$ with $j=1,\dots,n,$ is such that $\left\Vert v_{j}\right\Vert \leq\left\Vert v\right\Vert $, for any $v$ for which $\{v_{1},...,v_{j-1},v\}$ can be extended
to a basis of $\Lambda$.
\end{definition}

In particular, for lattices of dimension $n \leq 4,$ the norms of the Minkowski-reduced basis vectors achieve the successive minima \cite{pohst}.
For two-dimensional lattices, a Minkowski-reduced
basis is also called Lagrange-Gauss reduced basis and there
is a simple characterization~\cite{conwaysloane}: 
a lattice basis $\left\{ \ensuremath{v_{1},v_{2}}\right\} $
is a Minkowski-reduced basis if only if $\left\Vert v_{1}\right\Vert \leq\left\Vert v_{2}\right\Vert $
and $2\langle v_{1},v_{2}\rangle  \leq \left\Vert v_{1}\right\Vert ^{2}.$ 
It follows that the angle $\theta$ between
the minimum norm vectors $v_{1}$ and $v_{2}$ must satisfy $\text{ }\frac{\pi}{3} \leq\theta\leq\frac{2\pi}{3}.$

Since a Minkowski-reduced basis consists of short vectors that are  ``as  perpendicular as possible", it is a good choice for starting the np-algorithm. But
it is computationally hard to get such a basis from an arbitrary one.
One alternative is to use the basis obtained with the LLL algorithm~\cite{lll}, which approximates the Minkowski
basis and can be achieved in polynomial time.
For a basis that is LLL reduced, the ratio of the distances $\|x-x_{np}\|/\|x-x_{nl}\|$ can be bounded above by a constant that depends on the dimension alone~\cite{babai}. 

\section{Error Probability Analysis for an arbitrary Two-Dimensional Lattice}
\label{sec3}
We  assume that node-$i$ observes an independent identically distributed (iid) random process $\{X_i(t), t \in \mathbb{Z}\}$, where $t$ is the time index and that random processes observed at distinct nodes are mutually independent. The time index $t$ is suppressed in the sequel. The random vector $X=(X_1,X_2)$ is obtained by projecting a random process on the basis vectors of an underlying coordinate frame, which is assumed to be fixed.

Consider that the lattice $\Lambda$ is generated by the scaled generator matrix $\alpha V$, where $V$ is the generator matrix of the unscaled lattice. 
Let ${\mathcal V}(\lambda)$ and ${\mathcal B}(\lambda)$ denote the Voronoi and Babai cells, respectively, associated with lattice vector $\lambda \in \Lambda$. The error probability $P_e(\alpha)$, is the probability of the event $\{\lambda_{nl}(X)\neq \lambda_{np}(X)\}$ and $P_e:=\lim_{\alpha \rightarrow 0} P_e(\alpha) = {area(\cB(0)\bigcap \cV(0)^c)}/{area(\cB(0))}$. 

As will be discussed in this section, the Babai partition is dependent on, and the Voronoi partition is invariant to, the choice of lattice basis. Thus the error probability depends on the choice of the lattice basis. We will assume here that a Minkowski-reduced lattice basis can be chosen by the designer of the lattice code and it can be transformed into an equivalent basis $\{(1,0),(a,b)\}.$ This can be accomplished by applying QR decomposition to the lattice generator matrix (which has the original chosen basis vectors on its columns) in addition to convenient scalar factor. The reason for working with a Minkowski-reduced basis is partly justified by Ex.~\ref{ex1} below and the fact that the Voronoi region is easily determined since the relevant vectors are known; see Lemma~\ref{lemmathird} below.




An example to demonstrate the dependence of the error probability on the lattice basis is now presented.
\begin{example} \label{ex1} Consider a lattice
$\Lambda\subset\mathbb{R}^{2}$ with basis $\{(5,0),(3,1)\}.$ The probability of error in this case is $P_{e}=0.6$ (Fig. \ref{triangbasis}), whereas if we
start from the basis $\{(1,2),(-2,1)\},$ we achieve after the QR decomposition $\left\{ (\sqrt{5},0),(0,\sqrt{5})\right\}$ and $P_{e}=0,$ since the Babai region associated
with an orthogonal basis and the Voronoi region for rectangular lattices coincides.

\begin{figure}[h!]
\begin{center}
		\includegraphics[height=3.0cm]{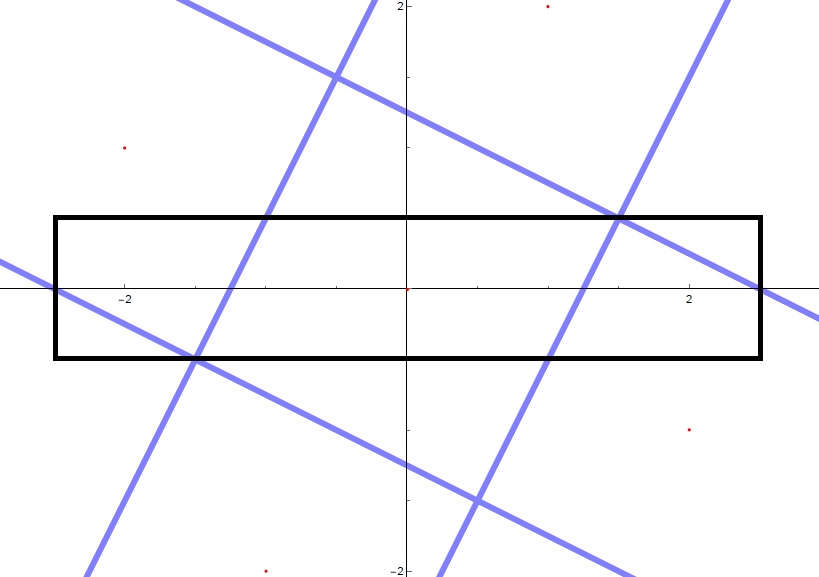}  
\caption{{Voronoi region and Babai partition of the triangular basis $\{(5,0),(3,1)\}$ }}
\label{triangbasis}
\end{center}
\end{figure}
\end{example}
Example \ref{ex1} illustrates the importance of  working with a good basis and partially explains our choice to work with a Minkowski-reduced basis. 
As mentioned above, additional motivation come from the observation that for a Minkowski-reduced basis in two dimensions, the relevant vectors are known.	

%


%

To see this, we first note that an equivalent condition for  a basis $\{v_{1}, v_{2}\}$ to be Minkowski reduced in dimension
two is $||v_{1}||\leq||v_{2}||\leq||v_{1}\pm v_{2}||$ ([\ref{galbraith}], Lemma 17.1.4),
from which the following result can be derived.

\begin{lemma} \label{lemmathird} If a Minkowski-reduced basis is given by $\{(1,0),(a,b)\}$ then, besides the basis vectors, a third relevant vector is
\begin{equation}
\begin{cases}
(-1+a,b), & \text{if } \frac{\pi}{3} \leq \theta \leq \frac{\pi}{2} \\
(1+a,b), & \text{if } \frac{\pi}{2} < \theta \leq \frac{2\pi}{3},
\end{cases}
\end{equation}
where $\theta$ is the angle between $(1,0)$ and $(a,b).$
\end{lemma}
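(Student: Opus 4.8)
The plan is to invoke Voronoi's characterization of relevant vectors~\cite{conwaysloane}: a nonzero $v \in \Lambda$ is relevant if and only if $\pm v$ are the unique shortest vectors in the coset $v + 2\Lambda$. Since $\Lambda/2\Lambda$ has exactly four cosets, indexed by the parities of the coordinates $(m,n)$ of $\lambda = m v_1 + n v_2$ with $v_1=(1,0)$ and $v_2=(a,b)$, the relevant vectors are precisely the (unique, up to sign) minimizers of $\|\lambda\|$ over the three nonzero cosets $(m,n)\equiv(1,0),(0,1),(1,1)\pmod 2$. The first two cosets return the basis vectors $\pm v_1$ and $\pm v_2$, so the entire content of the lemma is to identify the shortest vector of the odd--odd coset $(1,1)$, which must be one of $\pm(v_1+v_2)$ or $\pm(v_1-v_2)$.

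First I would extract the numerical constraints forced by the stated reduced-basis inequality $\|v_1\|\le\|v_2\|\le\|v_1\pm v_2\|$. Writing $c=\|v_2\|^2=a^2+b^2$, these read $c\ge 1$ and $1\pm 2a\ge 0$, i.e. $|a|\le \tfrac12$, whence $b^2=c-a^2\ge \tfrac34$. For $Q(m,n):=\|mv_1+nv_2\|^2=m^2+2mna+n^2c$, completing the square gives $Q(m,n)=(m+na)^2+n^2b^2\ge n^2b^2$. The shorter diagonal has squared length $d^2=1-2|a|+c=(1-|a|)^2+b^2\le 1+b^2$. I then discard every odd--odd pair other than $(\pm1,\pm1)$: if $|n|\ge 3$ then $Q\ge 9b^2=b^2+8b^2\ge b^2+6>1+b^2\ge d^2$, while if $|n|=1$ and $|m|\ge 3$ then $Q\ge m^2-|m|+c\ge 6+c>1+b^2\ge d^2$. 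Hence the odd--odd minimum is attained only at $v_1+v_2$ or $v_1-v_2$.

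The choice between the two diagonals follows from $\|v_1+v_2\|^2-\|v_1-v_2\|^2=4\langle v_1,v_2\rangle=4a$. Since $\cos\theta=a/\sqrt{c}$ carries the sign of $a$, the angle is acute exactly when $a>0$, where $v_1-v_2$ is the strictly shorter diagonal and its negative $(-1+a,b)$ is therefore relevant; the angle is obtuse exactly when $a<0$, where $v_1+v_2=(1+a,b)$ is relevant. This yields $(-1+a,b)$ for $\tfrac{\pi}{3}\le\theta<\tfrac{\pi}{2}$ and $(1+a,b)$ for $\tfrac{\pi}{2}<\theta\le\tfrac{2\pi}{3}$; the endpoints $\theta=\tfrac{\pi}{3},\tfrac{2\pi}{3}$ cause no trouble because there $|a|=\tfrac12\neq 0$, so the shorter diagonal is still strict and the Voronoi uniqueness holds.

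I expect the coset-minimization step to be the main obstacle: one must rule out that some odd--odd vector with a large coordinate is shorter than a diagonal, and the danger is precisely the regime of large $c$, where the crude bound $9b^2 > 2+c$ fails. The fix is to compare against the diagonal's \emph{exact} length $(1-|a|)^2+b^2$ rather than against $c$, at which point the reduced-basis bounds $|a|\le\tfrac12$ and $b^2\ge\tfrac34$ close the gap. A secondary subtlety is the right angle $\theta=\tfrac{\pi}{2}$ (so $a=0$): there the two diagonals are equally long, the odd--odd minimum is not unique up to sign, the lattice is rectangular, and strictly speaking no third relevant vector exists; the statement assigns $(-1,b)$ to this degenerate case only by continuity, which I would record in an accompanying remark.
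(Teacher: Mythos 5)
Your proof is correct, and it is considerably more than the paper provides: the paper gives no argument for this lemma at all, merely asserting that it ``can be derived'' from the reduction condition $\|v_1\|\le\|v_2\|\le\|v_1\pm v_2\|$ of Galbraith's Lemma 17.1.4. Your route through Voronoi's coset characterization (relevant vectors are the unique-up-to-sign minima of the nonzero cosets of $2\Lambda$ in $\Lambda$, see \cite{conwaysloane}) is the standard way to make that assertion rigorous, and your elimination of odd--odd pairs with $|n|\ge 3$ or $|m|\ge 3$ via $Q(m,n)=(m+na)^2+n^2b^2$ together with $|a|\le\tfrac12$, $b^2\ge\tfrac34$ is exactly the step the paper leaves unexamined. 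Two small points: first, you assert without verification that the cosets $(1,0)$ and $(0,1)$ return $\pm v_1$ and $\pm v_2$; this does need the same style of bound (e.g.\ $n$ even nonzero gives $Q\ge 4b^2\ge 3>1$ for the first coset, and $|m|\ge 2$, $|n|=1$ gives $Q\ge m^2-|m|+c\ge 2+c>c$ for the second), so you should record it rather than wave at it. Second, your observation that at $\theta=\pi/2$ the two diagonals tie, the odd--odd coset has no unique minimum, and hence no third relevant vector exists, is a genuine (if minor) imprecision in the lemma as stated that the paper does not acknowledge; flagging it as a remark is the right call.
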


Note that, if $\{{v\ensuremath{_{1},v_{2}}}\}$ is a Minkowski basis
then so is $\{{-v\ensuremath{_{1},v_{2}}}\}$ and hence any lattice
has a Minkowski basis with $\frac{\pi}{3}\leq\theta\leq\frac{\pi}{2}$.
So, if we consider the Minkowski-reduced basis as $\{(1,0),(a,b)\},$ with $a^{2}+b^{2} \geq 1$ and $0 \leq a \leq \frac{1}{2},$ it is possible to use Lemma \ref{lemmathird} to describe the Voronoi region of $\Lambda$ and determine its intersection with the associated Babai partition. Observe that the area of both regions must be the same and in this specific case, equal to $b.$ This means that the vertices that define the Babai rectangular partition are $\left(\pm \frac{1}{2}, \pm \frac{b}{2}\right).$ Therefore, we can state the following result

\begin{theorem} \label{thmmain} Consider a lattice $\Lambda \subset\mathbb{R}^{2}$ with a triangular Minkowski-reduced basis $\beta=\{v_1,v_2\}=\{(1,0),(a,b)\}$   such that the angle $\theta$ between $v_{1}$ and $v_{2}$ satisfies \textup{$\frac{\pi}{3}\leq\theta\leq\frac{\pi}{2}$.} The probability of error $P_e$ for the Babai partition is given by 
\begin{equation}
P_e=F(a, b)=\frac{a-a^{2}}{4b^{2}}.
\end{equation} 
\end{theorem}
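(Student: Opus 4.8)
The plan is to compute the two areas appearing in the definition $P_e=\mathrm{area}(\cB(0)\cap\cV(0)^c)/\mathrm{area}(\cB(0))$ directly, exploiting the explicit description of both cells that the Minkowski-reduced basis provides. First I would fix the Babai cell: since the generator matrix $V$ with columns $v_1=(1,0)^{T}$ and $v_2=(a,b)^{T}$ is upper triangular with diagonal entries $1$ and $b$, the observation in Sec.~\ref{sec1} (and the remark just preceding the statement) shows that $\cB(0)$ is the axis-aligned rectangle $[-\tfrac12,\tfrac12]\times[-\tfrac{b}{2},\tfrac{b}{2}]$, so $\mathrm{area}(\cB(0))=b$. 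The whole problem thus reduces to evaluating $\mathrm{area}(\cB(0)\cap\cV(0)^c)$, i.e. the portion of this rectangle that the nearest-plane rule assigns to the origin but which is strictly closer to some other lattice point.

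Next I would describe $\cV(0)$ through its bounding perpendicular bisectors. By Lemma~\ref{lemmathird}, for $\tfrac{\pi}{3}\le\theta\le\tfrac{\pi}{2}$ the relevant vectors are $\pm v_1$, $\pm v_2$ and $\pm(v_2-v_1)=\pm(a-1,b)$, so $\cV(0)$ is the intersection of the six half-planes $\langle x,v\rangle\le\tfrac12\|v\|^2$ over these $v$. The two bisectors coming from $\pm v_1$ are exactly the lines $x_1=\pm\tfrac12$, which are the left and right edges of $\cB(0)$ and hence remove no area from the rectangle. The excluded region is therefore governed entirely by the four bisectors of $\pm v_2$ and $\pm(v_2-v_1)$, and I would show each of these lines slices off one corner of the rectangle.

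Then I would compute the four corner pieces. For the bisector of $v_2$, namely $ax_1+bx_2=\tfrac12(a^2+b^2)$, I would intersect it with the top edge $x_2=\tfrac{b}{2}$ and the right edge $x_1=\tfrac12$, obtaining the crossing points $(\tfrac{a}{2},\tfrac{b}{2})$ and $(\tfrac12,\tfrac{a^2-a+b^2}{2b})$; together with the corner $(\tfrac12,\tfrac{b}{2})$ these bound a right triangle with legs $\tfrac{1-a}{2}$ and $\tfrac{a(1-a)}{2b}$, hence area $\tfrac{a(1-a)^2}{8b}$. The bisector of $v_2-v_1$ cuts the opposite top corner, giving legs $\tfrac{a}{2}$ and $\tfrac{a(1-a)}{2b}$, hence area $\tfrac{a^2(1-a)}{8b}$. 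Central symmetry of both $\cB(0)$ and $\cV(0)$ pairs the top-right triangle with the bottom-left one and the top-left with the bottom-right one, so the total excluded area is $2\cdot\tfrac{a(1-a)^2}{8b}+2\cdot\tfrac{a^2(1-a)}{8b}=\tfrac{a(1-a)}{4b}=\tfrac{a-a^2}{4b}$. Dividing by $\mathrm{area}(\cB(0))=b$ yields $P_e=\tfrac{a-a^2}{4b^2}=F(a,b)$.

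The delicate point, and the step I would treat most carefully, is verifying that these four triangles really are disjoint and that together they exhaust $\cB(0)\cap\cV(0)^c$: that each of the four bisectors meets the rectangle in exactly two edges, that the associated corner lies strictly on the excluded side, and that no point of the rectangle violates two of the constraints simultaneously (which would double-count area) or violates some constraint not accounted for. This is precisely where the reduction hypotheses $0\le a\le\tfrac12$, $b>0$ and $a^2+b^2\ge1$ enter. I would use them to confirm, for instance, that the top-edge crossing abscissa $\tfrac{a}{2}$ lies in $[0,\tfrac12)$ while the opposite crossing $-\tfrac{1-a}{2}$ lies in $[-\tfrac12,0)$, so that the top-right and top-left triangles cannot meet, and that every edge-crossing ordinate stays within $[-\tfrac{b}{2},\tfrac{b}{2}]$. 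Once this bookkeeping is in place, the area computation above is routine.
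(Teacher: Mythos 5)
Your proposal is correct and follows essentially the same route as the paper: both use Lemma~\ref{lemmathird} to identify the relevant vectors, describe $\cV(0)$ by perpendicular bisectors, and decompose $\cB(0)\cap\cV(0)^c$ into the same four corner triangles (your top-right and top-left triangles are exactly the two listed in the paper's proof, with the other two by central symmetry), arriving at the identical area computation. Your added bookkeeping on disjointness and on why the $\pm v_1$ bisectors remove nothing is a welcome tightening of details the paper reads off its figure, but it is not a different argument.
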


\begin{proof} To calculate $P_{e}$ for the lattice $\Lambda$, we first obtain the vertices of the Voronoi region. This is done by calculating the points of intersection of the perpendicular bisectors of the  three relevant vectors $(1,0), (a,b)$ and $(-1+a,b)$ (according to Lemma \ref{lemmathird}, Fig. \ref{rvectors}). Thus the vertices of the Voronoi region are given by  $\pm(\frac{1}{2},\frac{a^{2}+b^{2}-a}{2b})$,
$\pm(-\frac{1}{2},\frac{a^{2}+b^{2}-a}{2b})$ and $\pm(\frac{2a-1}{2},\frac{-a^{2}+b^{2}+a}{2})$.

\begin{figure}[h!]
\begin{center}
		\includegraphics[height=4.0cm]{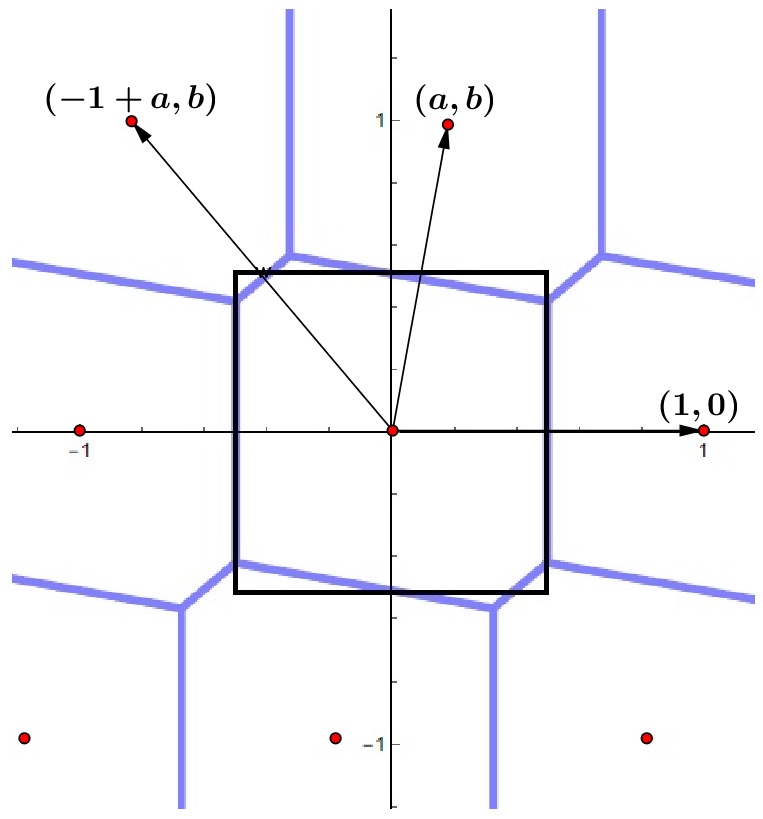}  
\caption{{Voronoi region, Babai partition and three relevant vectors}}
\label{rvectors}
\end{center}
\end{figure}

$P_{e}$ is then computed as the ratio between the area of the Babai
region which is not overlapped by the Voronoi region $\mathcal{V}(0)$ and the area $|b|$ of the Babai
region. From Fig. \ref{rvectors}, we get the error as the sum the areas of four triangles, where two of them are defined respectively by the points $\left(\frac{1}{2},\frac{b}{2}\right), \left(\frac{1}{2},\frac{a^{2}-a+b^{2}}{2b}\right), \left(\frac{a}{2}, \frac{b}{2}\right)$ and $\left(-\frac{1}{2},\frac{b}{2}\right), \left(-\frac{1}{2},\frac{a^{2}-a+b^{2}}{2b}\right), \left(\frac{a-1}{2}, \frac{b}{2}\right).$ The remaining two triangles are symmetric to these two. Therefore, the probability of error is the sum of the four areas, normalized by the area of the Voronoi region $|\det(V)|=|b|$. The explicit formula for it is given by $F(a, b) =  \dfrac{1}{4} \dfrac{a-a^{2}}{b^2}.$
\end{proof}

\begin{remark}
Note also that 
if $\rho:=\frac{\left\Vert v_{2}\right\Vert }{\left\Vert v_{1}\right\Vert }$
and  $\theta$ is defined to be the angle between the basis vectors, then the result of Theorem \ref{thmmain}
can be rewritten as 
\begin{equation}
P_{e}=H(\theta,\rho)=\frac{1}{4\rho}\frac{|\cos\theta|}{\sin^{2}\theta}(1-\rho|\cos\theta|).
\end{equation}
\end{remark}

We obtain the following Corollary, illustrated in Fig. \ref{contour},	 from  the probability of error $P_{e}=F(a,b)=\frac{1}{4} \frac{a}{b^2}(1-a)=\frac{1-(1-2a)^{2}}{16b^{2}}$ obtained in Theorem \ref{thmmain} with  $b \geq \frac{\sqrt{3}}{2}$ and $0 \leq a \leq \frac{1}{2}$.
		
\begin{corollary} For any two-dimensional lattice and a Babai partition constructed from the QR decomposition associated with a Minkowski-reduced basis where $\frac{\pi}{3} \leq \theta \leq \frac{\pi}{2},$ we have 
\begin{equation}
0 \leq P_{e} \leq \frac{1}{12},
\end{equation}
and
\begin{itemize}
\item[a)] $P_{e}=0 \Longleftrightarrow a=0,$ i.e., the lattice is orthogonal.
\item[b)] $P_{e}=\frac{1}{12} \Longleftrightarrow (a,b)=\left(\frac{1}{2}, \frac{\sqrt{3}}{2}\right),$ i.e., the lattice is equivalent to hexagonal lattice.
\item[c)] the level curves of $P_{e}$ are described as ellipsoidal arcs in the region $a^{2}+b^{2} \geq 1$ and $0 \leq a \leq \frac{1}{2}.$
\end{itemize}

\end{corollary}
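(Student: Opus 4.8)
The plan is to treat the corollary as a constrained-optimization and level-set analysis applied to the explicit formula $P_e = F(a,b) = \frac{a-a^2}{4b^2}$ supplied by Theorem~\ref{thmmain}, over the admissible region $\mathcal{R} = \{(a,b) : 0 \le a \le \tfrac12,\ a^2 + b^2 \ge 1\}$. First I would record the elementary geometric fact that throughout $\mathcal{R}$ one has $b \ge \tfrac{\sqrt3}{2}$: on the arc $a^2+b^2=1$ the quantity $b=\sqrt{1-a^2}$ is minimized at $a=\tfrac12$, giving $b=\tfrac{\sqrt3}{2}$, and in the interior $b$ is only larger. In particular $b>0$ and $1-a>0$ on $\mathcal{R}$, which immediately settles part (a): since $P_e = \frac{a(1-a)}{4b^2}$ is a ratio of a nonnegative numerator to a positive denominator, $P_e=0$ forces $a(1-a)=0$, hence $a=0$ (as $a\le\tfrac12<1$); conversely $a=0$ yields the orthogonal basis $\{(1,0),(0,b)\}$, for which the Babai and Voronoi cells coincide and $P_e=0$.

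For the two-sided bound and part (b), I would maximize $P_e$ over $\mathcal{R}$. The key observation is that for fixed $a$ the function is strictly decreasing in $b$, being proportional to $b^{-2}$, so the maximum over $b$ is attained at the smallest admissible value, namely $b^2 = 1-a^2$ on the boundary arc; since $\sqrt{1-a^2}\ge\tfrac{\sqrt3}{2}$ for $a\le\tfrac12$, this boundary value, rather than $b\ge\tfrac{\sqrt3}{2}$, is the binding constraint. Substituting collapses the problem to the single variable
\[
g(a) = \frac{a(1-a)}{4(1-a^2)} = \frac{a}{4(1+a)},
\]
after cancelling the factor $(1-a)$. Since $g'(a) = \frac{1}{4(1+a)^2} > 0$, the map $g$ is strictly increasing on $[0,\tfrac12]$ and attains its maximum $g(\tfrac12)=\tfrac1{12}$ at $a=\tfrac12$, which corresponds to $b=\tfrac{\sqrt3}{2}$, i.e.\ the hexagonal lattice. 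Strict monotonicity in both variables makes this maximizer unique, giving the upper bound $P_e\le\tfrac1{12}$, its attainment, and the uniqueness claimed in (b); together with $P_e\ge0$ this yields $0\le P_e\le\tfrac1{12}$.

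For part (c) I would pass to the level sets directly. Setting $P_e = c$ in $\frac{a-a^2}{4b^2}=c$ and completing the square in $a$ gives
\[
\Bigl(a - \tfrac12\Bigr)^2 + 4c\,b^2 = \tfrac14,
\]
which for $c>0$ is the equation of an ellipse in the $(a,b)$-plane centered at $(\tfrac12,0)$. The level curve of $P_e$ inside $\mathcal{R}$ is therefore the arc of this ellipse lying in $\{a^2+b^2\ge1,\ 0\le a\le\tfrac12\}$, establishing the ellipsoidal-arc description.

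The individual computations are routine; the only point requiring genuine care is justifying that the maximization truly reduces to the boundary arc --- that is, confirming that $P_e$ admits no interior maximum (handled by the strict $b$-monotonicity) and that, among the two lower bounds on $b$, it is $b\ge\sqrt{1-a^2}$ rather than $b\ge\tfrac{\sqrt3}{2}$ that is active on the relevant range of $a$. Once that reduction to $g(a)$ is in place, parts (a), (b), the bound, and (c) all follow from elementary calculus and algebra.
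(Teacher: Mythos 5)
Your proof is correct, and it follows the same basic route the paper intends: the corollary is read off from the explicit formula $F(a,b)=\frac{a-a^2}{4b^2}$ of Theorem~\ref{thmmain} restricted to the Minkowski region $0\le a\le\frac12$, $a^2+b^2\ge1$. The only difference is in how the maximization is organized. The paper (which gives no explicit proof) rewrites $P_e=\frac{1-(1-2a)^2}{16b^2}$ and uses the two independent bounds $1-(1-2a)^2\le 1$ and $b\ge\frac{\sqrt3}{2}$, which happen to be simultaneously tight exactly at the hexagonal point $(\frac12,\frac{\sqrt3}{2})$; this gives $P_e\le\frac1{12}$ and the equality case in one line. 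You instead perform a genuine two-stage constrained optimization: strict $b$-monotonicity pushes the maximum onto the arc $b^2=1-a^2$, and the reduction $g(a)=\frac{a}{4(1+a)}$ is strictly increasing, yielding the same bound together with uniqueness of the maximizer as a structural consequence rather than a coincidence of two separate bounds. Your version is slightly longer but more robust (it would still work if the numerator and denominator extremes did not align), and your correct observation that the binding constraint is $b\ge\sqrt{1-a^2}$ rather than the uniform $b\ge\frac{\sqrt3}{2}$ is a point the paper glosses over. Parts (a) and (c) are handled identically in substance: (a) from nonnegativity of $a(1-a)$ with $a\le\frac12$, and (c) by completing the square to get the ellipse $\bigl(a-\tfrac12\bigr)^2+4c\,b^2=\tfrac14$, which matches the paper's form $\frac{1-(1-2a)^2}{16b^2}=c$.
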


\begin{figure}[h!]
\begin{center}
		\includegraphics[height=6.5cm]{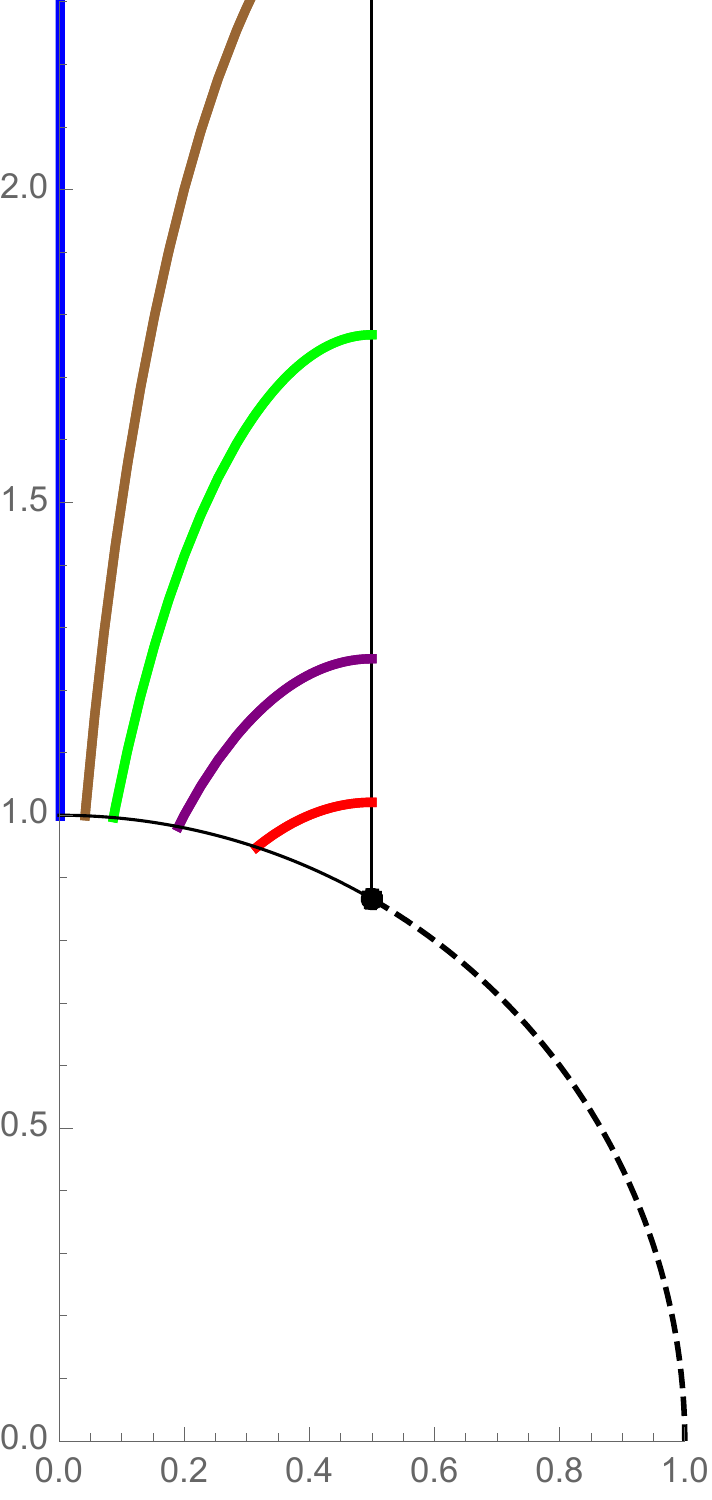}  
\caption{{Level curves of $P_{e}=k,$ in left-right ordering, for $k=0, k=0.01, k=0.02, k=0.04, k=0.06$ and $k=1/12 \approx 0.0833.$ Notice that $a$ is represented in the horizontal axis and $b$ in vertical axis. }}
\label{contour}
\end{center}
\end{figure}

\section{Rate Computation for Constructing a Babai Partition for arbitrary $n>1$}
\label{sec:babairate}
Communication protocols are presented for the centralized and interactive model along with associated rate calculations in the limit as $\alpha \rightarrow 0$.   

\subsection{Centralized Model}
We now describe the transmission protocol  $\Pi_{c}$ by which the nearest plane lattice point can be determined at the fusion center $F$.
Let $v_{m,l}/v_{m,m}=p_{m,l}/q_{m,l}$ where $p_{m,l}$ and $q_{m,l}>0$ are relatively prime. Note that we are assuming the generator matrix is such that the aforementioned ratios are rational, for $l>m.$ Let $q_m=l.c.m \ \{q_{m,l}, l>m\}$, where $l.c.m$ denotes the least common multiple of its argument. By definition $q_n=1$.
\begin{protocol} (Transmission, $\Pi_{c}$). Let $s(m)\in\{0,1,\ldots,q_m-1\}$ be the largest $s$ for which $[x_m/v_{m,m}-s/q_m]=[x_m/v_{m,m}]$. Then node $m$ sends 
$\tilde{b}_m=[x_m/v_{m,m}]$ and $s(m)$ to $F$, $m=1,2,\ldots,n$ (by definition $s(n)=0$).
\end{protocol}

Let $\overline{b}=(b_1,b_2,\ldots,b_n)$ be the coefficients of $\lambda_{np}$, the Babai point.
\begin{theorem}
The coefficients of the Babai point $\overline{b}$ can be determined at the fusion center $F$ after running transmission protocol $\Pi_c$.
\end{theorem}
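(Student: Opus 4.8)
The plan is to reduce to the upper-triangular case and then argue by downward induction on the coordinate index $m$, decoding one Babai coefficient at a time at $F$. As the paper notes, after a QR decomposition we may assume $V$ is upper triangular, so that $\mathcal{S}_i$ is the span of the first $i$ coordinate axes, $\mathcal{P}_i$ merely zeroes out the last $n-i$ coordinates, and $v_{i,i-1}^{\perp}$ is the vector with $v_{i,i}$ in position $i$ and zeros elsewhere. Substituting these into Babai's recursion collapses the nearest-plane rule to scalar form: tracking only the $m$-th coordinate of $z_m$, one finds $(z_m)_m = x_m - \sum_{l>m} b_l v_{m,l}$, and hence
\[
b_m = \left[\frac{x_m}{v_{m,m}} - \sum_{l>m} b_l\,\frac{v_{m,l}}{v_{m,m}}\right].
\]
First I would verify this scalar recursion carefully (a short telescoping computation on the coordinatewise update $(z_{i-1})_m = (z_i)_m - b_i v_{m,i}$), since it is what licenses purely sequential decoding: once $b_{m+1},\ldots,b_n$ are known at $F$, the correction term is completely determined.

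The next step is the arithmetic heart of the protocol. Writing $v_{m,l}/v_{m,m} = p_{m,l}/q_{m,l}$ and clearing denominators through $q_m = \lcm\{q_{m,l}: l>m\}$, the correction term becomes $\sum_{l>m} b_l p_{m,l}/q_{m,l} = k_m/q_m$ for the integer $k_m = \sum_{l>m} b_l p_{m,l}\,(q_m/q_{m,l})$, which $F$ can compute from the already-decoded $b_{m+1},\ldots,b_n$ and the known matrix entries. Splitting $k_m = a_m q_m + r_m$ with $a_m = \lfloor k_m/q_m\rfloor$ and $r_m \in \{0,1,\ldots,q_m-1\}$, and using $[t-a] = [t]-a$ for integer $a$, I would rewrite
\[
b_m = \left[\frac{x_m}{v_{m,m}} - \frac{r_m}{q_m}\right] - a_m.
\]
Thus the only quantity $F$ still needs is the rounded value $[x_m/v_{m,m} - r_m/q_m]$ for the particular residue $r_m \in \{0,\ldots,q_m-1\}$, and this is exactly what the side information $s(m)$ encodes.

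To close the induction I would decode $[x_m/v_{m,m} - r_m/q_m]$ from $\tilde{b}_m$ and $s(m)$. Setting $y = x_m/v_{m,m}$, the map $s \mapsto [y - s/q_m]$ is non-increasing in $s$, because $y - s/q_m$ decreases while the nearest-integer map is non-decreasing. By the protocol's definition, $s(m)$ is the largest $s$ in $\{0,\ldots,q_m-1\}$ with $[y-s/q_m] = \tilde{b}_m$; monotonicity then forces
\[
\left[\frac{x_m}{v_{m,m}} - \frac{r_m}{q_m}\right] =
\begin{cases}
\tilde{b}_m, & 0 \leq r_m \leq s(m),\\
\tilde{b}_m - 1, & s(m) < r_m \leq q_m-1,
\end{cases}
\]
so $F$ recovers $b_m = \tilde{b}_m - a_m$ or $b_m = \tilde{b}_m - 1 - a_m$ accordingly. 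The base case $m=n$ is immediate: $q_n = 1$, $k_n = 0$, $s(n)=0$, whence $b_n = \tilde{b}_n = [x_n/v_{n,n}]$.

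The main obstacle is justifying the two-valued dichotomy above, i.e.\ that a single integer $s(m)$ suffices. The point is that as $r_m$ ranges over $\{0,\ldots,q_m-1\}$ the argument $y - r_m/q_m$ sweeps an interval of length $(q_m-1)/q_m < 1$, so $[y - r_m/q_m]$ can drop by at most one across the whole range, and $s(m)$ marks precisely the threshold of that single drop. I would also fix once and for all a tie-breaking convention for $[\cdot]$ at half-integers and check that it is applied consistently both in the definition of $s(m)$ and in the identity $[t-a]=[t]-a$, so that an argument landing exactly on a half-integer cannot make $F$ and the node disagree. Apart from this convention, every quantity on the right-hand sides above is either transmitted ($\tilde{b}_m, s(m)$), previously decoded ($b_{m+1},\ldots,b_n$), or known from $V$, which completes the induction and hence the proof.
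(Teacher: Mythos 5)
Your proposal is correct and follows essentially the same route as the paper's proof: split $b_m$ into the rounded term $\bigl[x_m/v_{m,m}-s/q_m\bigr]$ plus an integer correction computable from the already-decoded $b_{m+1},\ldots,b_n$, and use $s(m)$ as the threshold deciding whether that rounded term equals $\tilde{b}_m$ or $\tilde{b}_m-1$. You additionally make explicit two points the paper leaves implicit --- the monotonicity/interval-length argument justifying the two-valued dichotomy, and the tie-breaking convention for $[\cdot]$ --- which strengthens rather than changes the argument.
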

\begin{proof}
Observe that each coefficient of $\overline{b}$ is given by
\begin{eqnarray}
\lefteqn{b_m=} & \nonumber \\ 
& \left[ \frac{x_m-\sum_{l=m+1}^{n} b_{l}v_{m,l}}{v_{m,m}}\right],
m=1,2,\ldots,n,
\end{eqnarray}
which is written in terms of  $\{z \}$ and $\lfloor z \rfloor$, the fractional and integer parts of real number $z$, resp.,  ($z=\lfloor z\rfloor +\{z\}$, $0 \leq \{z\} < 1$) by
\begin{eqnarray}
\lefteqn{b_m=} & \nonumber \\ 
& \left[ \frac{x_m}{v_{m,m}}-\left\{\frac{\sum_{l=m+1}^{n} b_{l}v_{m,l}}{v_{m,m}}\right\} \right]-  \left\lfloor \frac{\sum_{l=m+1}^{n} b_{l}v_{m,l}}{v_{m,m}}\right\rfloor, \nonumber \\
& ~~~~~~~~~~~m=1, 2, \ldots,n.
\end{eqnarray}
Since the fractional part in the above equation is of the form $s/q_m$, $s \in \{0,1,\ldots,q_m-1\}$, where $q_m$ is defined above, it follows that  $0 \leq s/q_m < 1$. Thus
\begin{eqnarray}
\lefteqn{{b}_m=} & \nonumber \\  & \left\{ \begin{array}{cc}
\tilde{b}_m-\left\lfloor \frac{\sum_{l=m+1}^{n}{b}_{l}v_{m,l}}{v_{m,m}}\right \rfloor, & s\leq s(m),  \\
\tilde{b}_m-\left\lfloor \frac{\sum_{l=m+1}^{n}b_{l}v_{m,l}}{v_{m,m}}\right \rfloor-1,  & s >  s(m).
\end{array}
\right.
\end{eqnarray}
can be computed in the fusion center $F$ in the order $m=n,n-1,\ldots,1$.
\end{proof}
\begin{corollary}\label{corocost}
The rate required to transmit $s(m)$, $m=1,2,\ldots,n-1$ is no larger than $\sum_{i=1}^{n-1}\log_2(q_i)$ bits.
\end{corollary}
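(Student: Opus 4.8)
The plan is to bound the number of bits needed to transmit each $s(m)$ separately and then sum, invoking the fact that a quantity taking values in a finite set of size $k$ can be described with $\lceil \log_2 k\rceil$ bits. First I would observe that, by the definition given in Protocol $\Pi_c$, the value $s(m)$ is constrained to lie in the set $\{0,1,\ldots,q_m-1\}$, which has exactly $q_m$ elements. Hence transmitting $s(m)$ requires at most $\lceil \log_2 q_m \rceil$ bits, and in particular no more than $\log_2(q_m)$ bits if one is willing to state the bound in the (slightly loose but standard) real-valued form used in the corollary.

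The key step is then to note that the ranges of the different $s(m)$ are independent — node $m$ selects its own $s(m)$ based only on $x_m$ and its local row of the generator matrix — so the total descriptive cost is additive. Summing the per-node bound over the indices that actually carry information, namely $m=1,2,\ldots,n-1$ (recall $s(n)=0$ is fixed by definition and carries no cost), yields
\begin{equation}
\sum_{m=1}^{n-1} \log_2(q_m) \text{ bits,}
\end{equation}
which after renaming the summation index is precisely $\sum_{i=1}^{n-1}\log_2(q_i)$. This establishes the stated upper bound.

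I do not expect a serious obstacle here, since the corollary is essentially a counting argument layered on top of the preceding theorem; the substantive work — showing that knowledge of the pair $(\tilde b_m, s(m))$ at $F$ suffices to reconstruct each $b_m$ — was already discharged in the proof of the theorem. The only point requiring a little care is the exclusion of $m=n$: because $q_n=1$ by definition and $s(n)=0$, the $n$-th term contributes $\log_2(1)=0$, so whether one writes the sum up to $n$ or up to $n-1$ is immaterial, but stating it up to $n-1$ is cleaner and matches the protocol's convention. A secondary subtlety worth flagging is the distinction between the integer bit count $\lceil\log_2 q_m\rceil$ and the real bound $\log_2 q_m$; the corollary uses the latter, which is the natural rate (asymptotic, per-symbol) quantity and is justified by block-coding many independent realizations, consistent with the iid setup introduced in Sec.~\ref{sec3}.
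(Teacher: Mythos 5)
Your argument is correct and is essentially the one the paper intends (the corollary is stated without an explicit proof, but it follows exactly by your counting argument: each $s(m)$ ranges over $\{0,1,\ldots,q_m-1\}$ and so costs at most $\log_2(q_m)$ bits, summed over $m=1,\ldots,n-1$). Your side remarks on $s(n)=0$ contributing nothing and on the $\lceil\log_2 q_m\rceil$ versus $\log_2 q_m$ distinction are consistent with the paper's rate-oriented reading of the bound.
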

Thus the total rate for computing the Babai point at the fusion center $F$ under the centralized model is no larger than $\sum_{i=1}^nh(p_i)-\log_2|\det V| -n \log_2(\alpha)+\sum_{i=1}^{n-1}\log_2(q_i)$ bits, where $h(p_i)$ is the differential entropy of random variable $X_i$, and scale factor $\alpha$ is small. Thus the incremental cost due to the $s(m)$'s does not scale with $\alpha$. However when $\alpha$ is small, this incremental cost can be considerable, if the lattice basis is not properly chosen as the following examples illustrate.

\begin{example} Consider the hexagonal $A_{2}$ lattice generated by 
$$V=\begin{pmatrix}
1 & \frac{1}{2} \\
0 & \frac{\sqrt{3}}{2}
\end{pmatrix}.$$ 
 The basis vectors forms an angle of $60^{\circ}$ and applying what we described above we have that the coefficients $b_{2}$ and $b_{1}$ are given respectively by
{\small \begin{equation}
b_{2}=\left[\frac{x_{2}}{v_{22}} \right]=\left[\frac{2}{\sqrt{3}}x_{2} \right]
\end{equation} }
and
{\small \begin{eqnarray}
b_{1}&=&\left[\frac{x_{1}}{v_{11}} - \left\{\frac{b_{2}v_{21}}{v_{11}} \right\} \right]-\left\lfloor\frac{b_{2}v_{21}}{v_{11}} \right\rfloor \\
&=& \left[ x_{1} -  \left\{\left[\frac{2}{\sqrt{3}}x_{2} \right] \frac{1}{2} \right\} \right] - \left\lfloor\left[ \frac{2}{\sqrt{3}}x_{2} \right] \frac{1}{2} \right\rfloor.
\end{eqnarray} }

	Hence, for any real vector $x=(x_1, x_2)$ we have $\left\{ \left[ \frac{2}{\sqrt{3}}x_{2} \right] \frac{1}{2} \right\}= \frac{s}{q}$, with $q=2$ and $s \in \{0,1\}$.  Node one must then send the largest integer $s(1)$ in the range $\{0,1\}$ for which $\left[x_{1}-\frac{s(1)}{q_{1}}\right]=[x_{1}]$ and $s(1)=0$ or $s(1)=1$ depending on the value that $x_{1}$ assumes.
	
	The cost of this procedure, according to Corollary \ref{corocost}, is no larger than $\log_{2}q_{1}=1$ bit. Thus the cost of constructing the nearest plane partition for the  hexagonal lattice is at most one bit.

\end{example}

\begin{example} Suppose a lattice generated by 
$$V=\begin{pmatrix}
1 & \frac{311}{1000} \\
0 & \frac{101}{100}
\end{pmatrix}.$$ 
One can notice that the basis vectors form an angle of approximately $72.89^{\circ}$ and they are already Minkowski-reduced. Using the theory developed above we have that
{\small \begin{equation}
b_{2}=\left[\frac{x_{2}}{v_{22}} \right]=\left[\frac{100}{101}x_{2} \right]
\end{equation} }
and
{\small \begin{eqnarray}
b_{1}&=&\left[\frac{x_{1}}{v_{11}} - \left\{\frac{b_{2}v_{21}}{v_{11}} \right\} \right]-\left\lfloor\frac{b_{2}v_{21}}{v_{11}} \right\rfloor \\
&=& \left[ x_{1} -  \left\{\left[\frac{100}{101}x_{2} \right] \frac{311}{1000} \right\} \right] - \left\lfloor\left[ \frac{100}{101}x_{2} \right] \frac{311}{1000} \right\rfloor.
\end{eqnarray} }

	Consider, for example, $x=(1, 1)$ then we have that $\left\{\left[\frac{100}{101}x_{2} \right] \frac{311}{1000} \right\}=\frac{311}{1000}=\frac{s}{q}.$ In this purpose, node one sends the largest integer $s(1)$ in the range $\{0,1, \dots, 999\}$ for which $\left[x_{1}-\frac{s(1)}{q_{1}}\right]=[x_{1}]$ and we get $s(1)=500.$ 
	
	This procedure will cost no larger than $\log_{2}q_{1}=\log_{2}1000 \approx 9.96$ and in the worst case, we need to send almost $10$ bits to achieve Babai partition in the centralized model. 

\end{example}

%
%

The analysis here points to the importance of the number-theoretic structure of the generator matrix  $V$ in determining the communication requirements for computing $x_{np}$. 

\subsection{Interactive Model}
For $i=n,n-1,\ldots,1$, node  $S_i$ sends $U_i=\left[ (X_i-\sum_{j=i+1}^n\alpha v_{ij}U_j)/\alpha v_{ii}\right]$ to all other nodes. The total number of bits communicated is given by $R=(n-1)\sum_{i=1}^n H(U_i|U_{i+1},U_{i+2},\ldots,U_n)$. For $\alpha$ suitably small, and under the assumption of independent $X_i$, this rate can be approximated by $R=(n-1)\sum_{i=1}^n h(p_i)-\log_2(\alpha v_{ii})$. Normalizing so that $V$ has unit determinant we get $R=(n-1)\sum_{i=1}^nh(p_i)-n(n-1) \log_2(\alpha)$.

\section{Conclusion and future work}	
\label{secCC}

	We have investigated the closest lattice point problem in a distributed network, under two communication models, centralized and interactive. By exploring the nearest plane (Babai) partition for a given Minkowski-reduced basis, we have determined the probability of error in analytic form in two dimensions. In two dimensions, the error depends on the ratio of the norms of these vectors and on the angle between them. We have also calculated the number of bits that nodes need to send in both models (centralized and interactive) to achieve the rectangular nearest plane partition. We have also demonstrated the importance of proper basis selection, for minimizing the probability of error. 	
	One question we are interested in is whether  similar results can be derived for greater dimensions. For example, it may be possible to generalize the results derived here to families $A_{n} $ and $D_{n}$ for which reduced form bases are already available.
	
\section{Acknowledgment}
CNPq (140797/2017-3, 312926/2013-8) and FAPESP (2013/25977-7) supported the work of MFB and SIRC. VV was supported by  CUNY-RF and CNPq (PVE 400441/2014-4).

%
%


\begin{thebibliography}{1}

\bibitem{agrelletal} \label{agrelletall}
E.~Agrell, T.~Eriksson, A.~Vardy and K.~Zeger,  \emph{Closest Point Search in Lattices}. IEEE Transactions on Information Theory 48(8), 2201-2214.  2002. 


\bibitem{ASD:2010}\label{ASD:2010}
O.~Ayaso, D.~Shah and M. A.~Dahleh, ``Information Theoretic Bounds for Distributed Computation Over Networks of Point-to-Point Channels". IEEE Transactions on Information Theory 56(12), pp. 6020-6039. \hskip 1em plus  0.5em minus 0.4em \relax 2010.

\bibitem{babai} \label{babai}
L.~Babai. ``On Lov\'asz lattice reduction and the nearest lattice point problem". Combinatorica, 6(1), 1-13.  \hskip 1em plus  0.5em minus 0.4em \relax 1986.

\bibitem{berger}\label{berger}
T.~Berger,  \emph{Rate distortion theory: A mathematical basis for data compression.} Prentice-Hall, Englewood Cliffs, NJ, 1971.



\bibitem{conwaysloane} \label{conwaysloane2}
J.~H. Conway and N.J.~A. Sloane, \emph{Sphere Packings, Lattices and Groups}, 3rd~ed. \hskip 1em plus 
0.5em minus 0.4em\relax New York, USA: Springer, 1999.
  



\bibitem{galbraith} \label{galbraith}
S.D.~Galbraith, \emph{Mathematics of Public Key Cryptography}. Cambridge University Press, New York. \hskip 1em plus  0.5em minus 0.4em \relax 2012.


%

\bibitem{mathcrypto} \label{mathcrypto}
J.~Hoffstein, J.~Pipher and J. H.~Silverman. \emph{An Introduction to Mathematical Cryptography}. Springer, New York. \hskip 1em plus  0.5em minus 0.4em \relax  2008.
%

\bibitem{KCR:2006} \label{KCR:2006} 
R.~Keralapura, G.~Cormode, and J.~Ramamirtham. ``Communication-efficient distributed monitoring of thresholded counts". Proceedings of the 2006 ACM SIGMOD international conference on Management of data, ACM. \hskip 1em plus  0.5em minus 0.4em \relax 2006.

\bibitem{KN:1997}
E. Kushilevitz and N. Nissan, \emph{Communication Complexity}, Cambridge University Press,
1997.



\bibitem{lll}\label{lll}
A. K.~Lenstra, H. W.~Lenstra and L.~ Lov\'asz, ``Factoring polynomials with rational coefficients". Mathematische Annalen 261(4), 515â \hskip 1em plus  0.5em minus 0.4em \relax 1982.

\bibitem{MaIshwar}
N.~Ma, and P.~Ishwar, ``Some results on distributed source coding for interactive function computation'', IEEE Transactions on Information Theory, vol. 57, No. 9, pp. 6180-6195, Sept. 2011. 

\bibitem{mink} 
H.~Minkowski, ``On the positive quadratic forms and on continued fractions algorithms (\"Uber die positiven quadratischen formen und\"uber kettenbruch\"ahnliche algorithmen)". J. Reine und Angewandte Math., vol. 107, 278–297 \hskip 1em plus  0.5em minus 0.4em \relax 1891.

\bibitem{Orlitsky:2001}
A.~Orlitsky and J.~R.~Roche, ``Coding for Computing'', IEEE Transactions on Information Theory, vol. 47, no. 3, pp. 903--917, March 2001. 

\bibitem{pohst}
M.~Pohst, ``On the computation of lattice vectors of minimal length, successive minima and reduced bases with applications". ACM SIGSAM Bulletin 15(1), 37-44.  \hskip 1em plus  0.5em minus 0.4em \relax· 1981.



  
\bibitem{RCP:2009} \label{RCP:2009}
S. A.~Ramprashad and G.~Caire and H. C.~Papadopoulos, ``Cellular and Network MIMO architectures: MU-MIMO spectral efficiency and costs of channel state information". Conference Record of the Forty-Third Asilomar Conference on Signals, Systems and Computers, 1811-1818. \hskip 1em plus  0.5em minus 0.4em \relax 2009.  

\bibitem{Shannon:2001} \label{Shannon:1948}
C.E.~Shannon, ``A Mathematical Theory of Communication".  Bell System Technical Journal, 27(3), 379-423. \hskip 1em plus  0.5em minus 0.4em \relax 1948.  

\bibitem{VB:2017}
V. A. Vaishampayan and M. F. Bollauf, ``Communication Cost of Transforming a Nearest Plane Partition to the Voronoi Partition,'' (in press) Proc., IEEE Intl. Symp. Inform. Th., Jun. 2017.

\bibitem{Yao:1979} \label{Yao:1979}
A. C.~Yao, ``Some Complexity Questions Related to Distributive Computing(Preliminary Report)". Proceedings of the Eleventh Annual ACM Symposium on Theory of Computing, STOC '79, 209-213.  \hskip 1em plus  0.5em minus 0.4em \relax 1979.

\end{thebibliography}
\end{document}